% This is LLNCS.DEM the demonstration file of
% the LaTeX macro package from Springer-Verlag
% for Lecture Notes in Computer Science,
% version 2.4 for LaTeX2e as of 16. April 2010
%
\RequirePackage{amsmath}
\documentclass{llncs}
\usepackage{makeidx}  % allows for indexgeneration
\usepackage{amssymb,authblk}
\usepackage{t1enc}
\usepackage[utf8]{inputenc}
\usepackage{float}
\usepackage{tikz}
\usepackage{cite}
\usepackage{amsrefs}
\usetikzlibrary{calc}
\usetikzlibrary{decorations}
\usetikzlibrary{decorations.pathreplacing}
\usepackage{tabularx}
\newcolumntype{C}{>{\centering\arraybackslash}m{28pt}}

\begin{document}

\title{On the complexity of color-avoiding site and bond percolation}
\titlerunning{On the complexity of color-avoiding percolation}  % abbreviated title (for running head)
%                                     also used for the TOC unless
%                                     \toctitle is used
%
\author{Roland Molontay \inst{1} \and Kitti Varga \inst{2}}
\authorrunning{R. Molontay, K. Varga} % abbreviated author list (for running head)
%
%%%% list of authors for the TOC (use if author list has to be modified)
%\tocauthor{Ivar Ekeland, Roger Temam, Jeffrey Dean, David Grove,
%Craig Chambers, Kim B. Bruce, and Elisa Bertino}
%
\institute{Department of Stochastics, Budapest University of Technology and Economics, Budapest, Hungary\\
MTA-BME Stochastics Research Group, Budapest, Hungary \\
\email{molontay@math.bme.hu}
\and
Department of Computer Science and Information Theory, Budapest University of Technology and Economics, Budapest, Hungary\\
%MTA Alfr\'ed R\'enyi Institute of Mathematics, Budapest, Hungary\\
\email{vkitti@cs.bme.hu}}

\maketitle              % typeset the title of the contribution

\begin{abstract}
The mathematical analysis of robustness and error-tolerance of complex networks has been in the center of research interest. On the other hand, little work has been done when the attack-tolerance of the vertices or edges are not independent but certain classes of vertices or edges share a mutual vulnerability. In this study, we consider a graph and we assign colors to the vertices or edges, where the color-classes correspond to the shared vulnerabilities. An important problem is to find robustly connected vertex sets: nodes that remain connected to each other by paths providing any type of error (i.e. erasing any vertices or edges of the given color). This is also known as color-avoiding percolation.

In this paper, we study various possible modeling approaches of shared vulnerabilities, we analyze the computational complexity of finding the robustly (color-avoiding) connected components. We find that the presented approaches differ significantly regarding their complexity.
\keywords{computational complexity, color-avoiding percolation, robustly connected components, attack tolerance, shared vulnerability}
\end{abstract}

\section{Introduction and related works}

Understanding the attack and error tolerance of complex networks -- i.e. the ability to maintain the overall connectivity of the network as the vertices (or edges) are removed -- has attracted a great deal of research interest in the last two decades \cites{albert2000error, callaway2000network, zhao2005tolerance, shao2015percolation, barabasi2016network}. Most of the works have focused on random error, meaning that the nodes are considered to be homogeneous with respect to their vulnerabilities, or hub-targeted attack, i.e. the nodes fail preferentially according to a structural property such as degree or betweenness centrality. However, real-world networks are typically heterogeneous (not only with respect to their degree distribution): nodes (or edges) can be separated into different classes regarding a mutually shared vulnerability within the class. The shared vulnerabilities can be modeled by assigning a color to each class that may represent a shared eavesdropper, a controlling entity or correlated failures. A "color-avoiding" percolation framework was developed by Krause et al. \cites{krause2016hidden, krause2017color, kadovic2018bond}.

Traditional percolation theory can be used to study the behaviour of connected components in a graph if a vertex (site) or edge (bond) failure occurs with a given probability \cites{stauffer2014introduction, newman2018networks}. In the traditional approach a single path provides connectivity, however, here connectivity corresponds to the ability to avoid all vulnerable sets of vertices or edges via multiple paths such that no color is required for all paths (color-avoiding connectivity).

This question is different from $k$-core percolation where any $k$ paths
are sufficient between two nodes \cites{yuan2016k, dorogovtsev2006k}; and also different from $k$-connectivity, where $k$ mutually independent paths are required \cite{penrose1999k}. Another related concept is percolation on multiplex networks \cites{son2012percolation, hackett2016bond} where the layers can be thought of as the colors, but this approach also differs in the definition of connectivity \cite{kadovic2018bond}.

To study color-avoiding percolation a new framework was needed. The problem was introduced by Krause et al. \cite{krause2016hidden} who analyzed the color-avoiding connectivity of networks with shared vulnerabilities on the vertices. The authors also examined the latent color-avoiding connectivity structure of the AS-level Internet \cite{siganos2003power} where the color of the node represents the country to which the router is registered. They have found that 26{,}228 out of 49{,}743 of the routers are in the largest color-avoiding component, i.e. secure communication can be obtained among them by splitting the message into more pieces and transmitting them on different paths - even if every country is eavesdropping on its traffic \cite{krause2016hidden}. In \cite{krause2017color} the theory of color-avoiding percolation has been extended. The authors study analytically and numerically the maximal set of nodes that are color-avoiding connected in random networks with randomly
distributed colors. Shekhtman et al. \cite{shekhtman2018critical} generalize this framework to study secure message-passing in networks with a given community structure and different classes of vulnerabilities. Kadovi\'c et al. \cite{kadovic2018bond} formulated color-avoiding percolation for colored edges as well and studied color-avoiding bond and site percolation for networks with a power-law degree distribution.

Beside color-avoiding percolation, finding paths in colored graphs has gained interest in other domains as well. Wu \cite{wu2012maximum} introduced the Maximum Colored Disjoint Paths (\textsc{MaxCDP}) problem that is to find the maximum number of vertex-disjoint paths with edges of the same color between two vertices. The complexity of \textsc{MaxCDP} has been investigated: it can be solved in polynomial time if the input graph contains one color but if there are at least two colors the problem is NP-hard \cites{wu2012maximum, gourves2012paths}. An even harder variant of the problem, to find the maximum number of vertex-disjoint and color-disjoint uni-color paths, was introduced by Dondi et al. \cite{dondi2016finding}.

Another related problem is finding $k$-multicolor paths in a graph. Santos et al. \cite{santos2017multicolour} show that the problem is NP-hard and it is also hard to approximate. Another interesting problem is finding a path from a vertex that meets all the colors in a graph. This problem is NP-hard considering a properly colored directed graph, as well as finding the shortest or longest such paths \cite{granata2012complexity}.  A survey on algorithmic and computational results for other coloring problems can be found in \cite{malaguti2010survey}.

On the other hand, none of the previously mentioned works have addressed the computational complexity of color-avoiding percolation. In this article we present various modeling approaches to handle shared vulnerabilities along with the analysis regarding computational complexity.  Section \ref{model} is devoted to the problem definition considering both vertices and edges as the targets of the attack. In Section \ref{complex} we conduct complexity analysis and find that, although the presented approaches are seemingly very similar, they differ significantly regarding computational complexity. Section \ref{conlusion} concludes the work.

\section{Modeling shared vulnerabilities in networks}
\label{model}

%color-avoiding vertex connected
%strongly color-avoiding vertex connected
%weakly

There are two main approaches of modeling shared vulnerabilities in networks depending on the target of the attack: either the links between the nodes can be destroyed (or eavesdropped) or the nodes themselves are the subject of a possible failure. The former is modeled by coloring the edges according to the shared vulnerabilities leading to color-avoiding bond percolation, while the latter is represented by assigning a color to the vertices having the same vulnerability resulting in color-avoiding site percolation. The color-avoiding edge- and  vertex-connectivity is illustrated in Fig.~\ref{fig:colored_graph} on an Erd\H{o}s-R\'enyi random graph, the exact concepts will be introduced later in this section.

\subsection{Coloring the edges}

It is natural to consider the case when the edges with shared vulnerabilities are the subject of the attack and the vertices remain indistinguishable and unharmed. Some possible real-world examples with edges having shared vulnerabilities: different means of transportation (bus, underground, railway etc.) for a traffic network,  various metabolic pathways depending on particular biochemical profiles \cite{kadovic2017color}.

\begin{figure}
 \centering
  \begin{tikzpicture}
  \tikzstyle{vertex}=[draw,circle,fill,minimum size=5,inner sep=0]
  \tikzstyle{vertex_red}=[draw,circle,fill=red,minimum size=7,inner sep=0]
  \tikzstyle{vertex_blue}=[draw,circle,fill=blue,minimum size=7,inner sep=0]
  \tikzstyle{vertex_green}=[draw,circle,fill=green,minimum size=7,inner sep=0]
  \tikzstyle{vertex_deleted}=[draw,circle,dashed,minimum size=7,inner sep=0]
  
  \tikzstyle{vertex2}=[draw,circle,fill,minimum size=4,inner sep=0]
  \tikzstyle{vertex2_red}=[draw,circle,fill=red,minimum size=6,inner sep=0]
  \tikzstyle{vertex2_blue}=[draw,circle,fill=blue,minimum size=6,inner sep=0]
  \tikzstyle{vertex2_green}=[draw,circle,fill=green,minimum size=6,inner sep=0]
  \tikzstyle{vertex2_deleted}=[draw,circle,dashed,minimum size=6,inner sep=0]
  
  \begin{scope}[shift={(-3,0)}, scale=0.9]
  \node[vertex] (v1) at (-1.75,0) [label={[xshift=2pt, yshift=-2pt] $1$}] {};
  \node[vertex] (v2) at (0.75,-1.25) [label=left: $2$] {};
  \node[vertex] (v3) at (2.25,0.25) [label=above: $3$] {};
  \node[vertex] (v4) at (-2.75,0.75) [label=above: $4$] {};
  \node[vertex] (v5) at (2,-0.75) [label=right: $5$] {};
  \node[vertex] (v6) at (0,0) [label=above: $6$] {};
  \node[vertex] (v7) at (0.5,-1.75) [label=below: $7$] {};
  \node[vertex] (v8) at (1.5,0.6) [label=above: $8$] {};
  \node[vertex] (v9) at (-1.25,-1.25) [label={[xshift=6pt, yshift=-7pt] $9$}] {};
  \node[vertex] (v10) at (1.5,-1.75) [label=below: $10$] {};
  \node[vertex] (v11) at (1.1,0.1) [label={[xshift=6pt, yshift=-7pt] $11$}] {};
  \node[vertex] (v12) at (-1,-0.5) [label={[xshift=7pt, yshift=-11pt] $12$}] {};
  \node[vertex] (v13) at (1,-0.75) [label={[xshift=6pt, yshift=-4pt] $13$}] {};
  \node[vertex] (v14) at (0.5,1) [label=above: $14$] {};
  \node[vertex] (v15) at (-1.5,0.8) [label=above: $15$] {};
  \node[vertex] (v16) at (-3,-1.5) [label=below: $16$] {};
  \node[vertex] (v17) at (-0.75,0.75) [label=above: $17$] {};
  \node[vertex] (v18) at (-2.5,-0.25) [label=left: $18$] {};
  \node[vertex] (v19) at (-2.5,-1) [label={[xshift=-7pt, yshift=-7pt] $19$}] {};
  \node[vertex] (v20) at (-0.75,-1.75) [label=below: $20$] {};
  
  \draw[very thick, blue] (v1) -- (v4);
  \draw[very thick, green] (v1) -- (v6);
  \draw[very thick, blue] (v1) -- (v9);
  \draw[very thick, red] (v1) -- (v17);
  \draw[very thick, blue] (v1) -- (v18);
  \draw[very thick, red] (v1) -- (v19);
  
  \draw[very thick, green] (v2) -- (v5);
  \draw[very thick, green] (v2) -- (v6);
  \draw[very thick, red] (v2) -- (v7);
  \draw[very thick, blue] (v2) -- (v13);
  
  \draw[very thick, red] (v3) -- (v5);
  \draw[very thick, red] (v3) -- (v8);
  
  \draw[very thick, green] (v4) -- (v18);
  
  \draw[very thick, blue] (v5) -- (v10);
  \draw[very thick, red] (v5) -- (v11);
  \draw[very thick, red] (v5) -- (v13);
  
  \draw[very thick, blue] (v6) -- (v8);
  \draw[very thick, red] (v6) -- (v11);
  \draw[very thick, red] (v6) -- (v12);
  \draw[very thick, green] (v6) -- (v13);
  \draw[very thick, blue] (v6) -- (v15);
  
  \draw[very thick, green] (v7) -- (v10);
  \draw[very thick, red] (v7) -- (v20);
  
  \draw[very thick, green] (v8) -- (v14);
  
  \draw[very thick, blue] (v9) -- (v12);
  \draw[very thick, red] (v9) -- (v20);
  
  \draw[very thick, red] (v11) -- (v14);
  
  \draw[very thick, red] (v12) -- (v15);
  \draw[very thick, red] (v12) -- (v17);
  \draw[very thick, green] (v12) -- (v19);
  
  \draw[very thick, red] (v14) -- (v17);
  
  \draw[very thick, blue] (v15) -- (v18);
  
  \draw[very thick, green] (v16) -- (v19);
  
  \draw[very thick, red] (v18) -- (v19);
  
  \draw[very thick, red] (v19) -- (v20);
  \end{scope}
  
  \begin{scope}[shift={(3,0)}, scale=0.9]
  \node[vertex_blue] (v1) at (-1.75,0) [label={[xshift=2pt, yshift=-2pt] $1$}] {};
  \node[vertex_blue] (v2) at (0.75,-1.25) [label=left: $2$] {};
  \node[vertex_green] (v3) at (2.25,0.25) [label=above: $3$] {};
  \node[vertex_red] (v4) at (-2.75,0.75) [label=above: $4$] {};
  \node[vertex_green] (v5) at (2,-0.75) [label=right: $5$] {};
  \node[vertex_green] (v6) at (0,0) [label=above: $6$] {};
  \node[vertex_red] (v7) at (0.5,-1.75) [label=below: $7$] {};
  \node[vertex_green] (v8) at (1.5,0.6) [label=above: $8$] {};
  \node[vertex_green] (v9) at (-1.25,-1.25) [label={[xshift=6pt, yshift=-7pt] $9$}] {};
  \node[vertex_green] (v10) at (1.5,-1.75) [label=below: $10$] {};
  \node[vertex_red] (v11) at (1.1,0.1) [label={[xshift=6pt, yshift=-7pt] $11$}] {};
  \node[vertex_red] (v12) at (-1,-0.5) [label={[xshift=7pt, yshift=-11pt] $12$}] {};
  \node[vertex_blue] (v13) at (1,-0.75) [label={[xshift=6pt, yshift=-4pt] $13$}] {};
  \node[vertex_red] (v14) at (0.5,1) [label=above: $14$] {};
  \node[vertex_red] (v15) at (-1.5,0.8) [label=above: $15$] {};
  \node[vertex_green] (v16) at (-3,-1.5) [label=below: $16$] {};
  \node[vertex_blue] (v17) at (-0.75,0.75) [label=above: $17$] {};
  \node[vertex_red] (v18) at (-2.5,-0.25) [label=left: $18$] {};
  \node[vertex_green] (v19) at (-2.5,-1) [label={[xshift=-7pt, yshift=-7pt] $19$}] {};
  \node[vertex_green] (v20) at (-0.75,-1.75) [label=below: $20$] {};
  
  \draw[thick] (v1) -- (v4);
  \draw[thick] (v1) -- (v6);
  \draw[thick] (v1) -- (v9);
  \draw[thick] (v1) -- (v17);
  \draw[thick] (v1) -- (v18);
  \draw[thick] (v1) -- (v19);
  
  \draw[thick] (v2) -- (v5);
  \draw[thick] (v2) -- (v6);
  \draw[thick] (v2) -- (v7);
  \draw[thick] (v2) -- (v13);
  
  \draw[thick] (v3) -- (v5);
  \draw[thick] (v3) -- (v8);
  
  \draw[thick] (v4) -- (v18);
  
  \draw[thick] (v5) -- (v10);
  \draw[thick] (v5) -- (v11);
  \draw[thick] (v5) -- (v13);
  
  \draw[thick] (v6) -- (v8);
  \draw[thick] (v6) -- (v11);
  \draw[thick] (v6) -- (v12);
  \draw[thick] (v6) -- (v13);
  \draw[thick] (v6) -- (v15);
  
  \draw[thick] (v7) -- (v10);
  \draw[thick] (v7) -- (v20);
  
  \draw[thick] (v8) -- (v14);
  
  \draw[thick] (v9) -- (v12);
  \draw[thick] (v9) -- (v20);
  
  \draw[thick] (v11) -- (v14);
  
  \draw[thick] (v12) -- (v15);
  \draw[thick] (v12) -- (v17);
  \draw[thick] (v12) -- (v19);
  
  \draw[thick] (v14) -- (v17);
  
  \draw[thick] (v15) -- (v18);
  
  \draw[thick] (v16) -- (v19);
  
  \draw[thick] (v18) -- (v19);
  
  \draw[thick] (v19) -- (v20);
  \end{scope}
  
  \begin{scope}[shift={(-4.125,-4)}, scale=0.6]
  \node[vertex2] (v1) at (-1.75,0) [label={[xshift=2pt, yshift=-2pt] \scriptsize{$1$}}] {};
  \node[vertex2] (v2) at (0.75,-1.25) [label=left: \scriptsize{$2$}] {};
  \node[vertex2] (v3) at (2.25,0.25) [label=above: \scriptsize{$3$}] {};
  \node[vertex2] (v4) at (-2.75,0.75) [label=above: \scriptsize{$4$}] {};
  \node[vertex2] (v5) at (2,-0.75) [label=right: \scriptsize{$5$}] {};
  \node[vertex2] (v6) at (0,0) [label=above: \scriptsize{$6$}] {};
  \node[vertex2] (v7) at (0.5,-1.75) [label=below: \scriptsize{$7$}] {};
  \node[vertex2] (v8) at (1.5,0.6) [label=above: \scriptsize{$8$}] {};
  \node[vertex2] (v9) at (-1.25,-1.25) [label={[xshift=6pt, yshift=-7pt] \scriptsize{$9$}}] {};
  \node[vertex2] (v10) at (1.5,-1.75) [label=below: \scriptsize{$10$}] {};
  \node[vertex2] (v11) at (1.1,0.1) [label={[xshift=6pt, yshift=-7pt] \scriptsize{$11$}}] {};
  \node[vertex2] (v12) at (-1,-0.5) [label={[xshift=7pt, yshift=-11pt] \scriptsize{$12$}}] {};
  \node[vertex2] (v13) at (1,-0.75) [label={[xshift=6pt, yshift=-4pt] \scriptsize{$13$}}] {};
  \node[vertex2] (v14) at (0.5,1) [label=above: \scriptsize{$14$}] {};
  \node[vertex2] (v15) at (-1.5,0.8) [label=above: \scriptsize{$15$}] {};
  \node[vertex2] (v16) at (-3,-1.5) [label=below: \scriptsize{$16$}] {};
  \node[vertex2] (v17) at (-0.75,0.75) [label=above: \scriptsize{$17$}] {};
  \node[vertex2] (v18) at (-2.5,-0.25) [label=left: \scriptsize{$18$}] {};
  \node[vertex2] (v19) at (-2.5,-1) [label={[xshift=-7pt, yshift=-7pt] \scriptsize{$19$}}] {};
  \node[vertex2] (v20) at (-0.75,-1.75) [label=below: \scriptsize{$20$}] {};
  
  \draw[thick, blue] (v1) -- (v4);
  \draw[thick, blue] (v1) -- (v9);
  \draw[thick, red] (v1) -- (v17);
  \draw[thick, blue] (v1) -- (v18);
  \draw[thick, red] (v1) -- (v19);

  \draw[thick, red] (v2) -- (v7);
  \draw[thick, blue] (v2) -- (v13);
  
  \draw[thick, red] (v3) -- (v5);
  \draw[thick, red] (v3) -- (v8);
  
  \draw[thick, blue] (v5) -- (v10);
  \draw[thick, red] (v5) -- (v11);
  \draw[thick, red] (v5) -- (v13);
  
  \draw[thick, blue] (v6) -- (v8);
  \draw[thick, red] (v6) -- (v11);
  \draw[thick, red] (v6) -- (v12);
  \draw[thick, blue] (v6) -- (v15);
  
  \draw[thick, red] (v7) -- (v20);
  
  \draw[thick, blue] (v9) -- (v12);
  \draw[thick, red] (v9) -- (v20);
  
  \draw[thick, red] (v11) -- (v14);
  
  \draw[thick, red] (v12) -- (v15);
  \draw[thick, red] (v12) -- (v17);
  
  \draw[thick, red] (v14) -- (v17);
  
  \draw[thick, blue] (v15) -- (v18);

  \draw[thick, red] (v18) -- (v19);
  
  \draw[thick, red] (v19) -- (v20);
  \end{scope}
  
  \begin{scope}[shift={(0,-4)}, scale=0.6]
  \node[vertex2] (v1) at (-1.75,0) [label={[xshift=2pt, yshift=-2pt] \scriptsize{$1$}}] {};
  \node[vertex2] (v2) at (0.75,-1.25) [label=left: \scriptsize{$2$}] {};
  \node[vertex2] (v3) at (2.25,0.25) [label=above: \scriptsize{$3$}] {};
  \node[vertex2] (v4) at (-2.75,0.75) [label=above: \scriptsize{$4$}] {};
  \node[vertex2] (v5) at (2,-0.75) [label=right: \scriptsize{$5$}] {};
  \node[vertex2] (v6) at (0,0) [label=above: \scriptsize{$6$}] {};
  \node[vertex2] (v7) at (0.5,-1.75) [label=below: \scriptsize{$7$}] {};
  \node[vertex2] (v8) at (1.5,0.6) [label=above: \scriptsize{$8$}] {};
  \node[vertex2] (v9) at (-1.25,-1.25) [label={[xshift=6pt, yshift=-7pt] \scriptsize{$9$}}] {};
  \node[vertex2] (v10) at (1.5,-1.75) [label=below: \scriptsize{$10$}] {};
  \node[vertex2] (v11) at (1.1,0.1) [label={[xshift=6pt, yshift=-7pt] \scriptsize{$11$}}] {};
  \node[vertex2] (v12) at (-1,-0.5) [label={[xshift=7pt, yshift=-11pt] \scriptsize{$12$}}] {};
  \node[vertex2] (v13) at (1,-0.75) [label={[xshift=6pt, yshift=-4pt] \scriptsize{$13$}}] {};
  \node[vertex2] (v14) at (0.5,1) [label=above: \scriptsize{$14$}] {};
  \node[vertex2] (v15) at (-1.5,0.8) [label=above: \scriptsize{$15$}] {};
  \node[vertex2] (v16) at (-3,-1.5) [label=below: \scriptsize{$16$}] {};
  \node[vertex2] (v17) at (-0.75,0.75) [label=above: \scriptsize{$17$}] {};
  \node[vertex2] (v18) at (-2.5,-0.25) [label=left: \scriptsize{$18$}] {};
  \node[vertex2] (v19) at (-2.5,-1) [label={[xshift=-7pt, yshift=-7pt] \scriptsize{$19$}}] {};
  \node[vertex2] (v20) at (-0.75,-1.75) [label=below: \scriptsize{$20$}] {};
  
  \draw[thick, green] (v1) -- (v6);
  \draw[thick, red] (v1) -- (v17);
  \draw[thick, red] (v1) -- (v19);
  
  \draw[thick, green] (v2) -- (v5);
  \draw[thick, green] (v2) -- (v6);
  \draw[thick, red] (v2) -- (v7);
  
  \draw[very thick, red] (v3) -- (v5);
  \draw[very thick, red] (v3) -- (v8);
  
  \draw[thick, green] (v4) -- (v18);
  
  \draw[thick, red] (v5) -- (v11);
  \draw[thick, red] (v5) -- (v13);
  
  \draw[thick, red] (v6) -- (v11);
  \draw[thick, red] (v6) -- (v12);
  \draw[thick, green] (v6) -- (v13);
  
  \draw[thick, green] (v7) -- (v10);
  \draw[thick, red] (v7) -- (v20);
  
  \draw[thick, green] (v8) -- (v14);
  
  \draw[thick, red] (v9) -- (v20);
  
  \draw[thick, red] (v11) -- (v14);
  
  \draw[thick, red] (v12) -- (v15);
  \draw[thick, red] (v12) -- (v17);
  \draw[thick, green] (v12) -- (v19);
  
  \draw[thick, red] (v14) -- (v17);
  
  \draw[thick, green] (v16) -- (v19);
  
  \draw[thick, red] (v18) -- (v19);
  
  \draw[thick, red] (v19) -- (v20);
  \end{scope}
  
  \begin{scope}[shift={(4.125,-4)}, scale=0.6]
  \node[vertex2] (v1) at (-1.75,0) [label={[xshift=2pt, yshift=-2pt] \scriptsize{$1$}}] {};
  \node[vertex2] (v2) at (0.75,-1.25) [label=left: \scriptsize{$2$}] {};
  \node[vertex2] (v3) at (2.25,0.25) [label=above: \scriptsize{$3$}] {};
  \node[vertex2] (v4) at (-2.75,0.75) [label=above: \scriptsize{$4$}] {};
  \node[vertex2] (v5) at (2,-0.75) [label=right: \scriptsize{$5$}] {};
  \node[vertex2] (v6) at (0,0) [label=above: \scriptsize{$6$}] {};
  \node[vertex2] (v7) at (0.5,-1.75) [label=below: \scriptsize{$7$}] {};
  \node[vertex2] (v8) at (1.5,0.6) [label=above: \scriptsize{$8$}] {};
  \node[vertex2] (v9) at (-1.25,-1.25) [label={[xshift=6pt, yshift=-7pt] \scriptsize{$9$}}] {};
  \node[vertex2] (v10) at (1.5,-1.75) [label=below: \scriptsize{$10$}] {};
  \node[vertex2] (v11) at (1.1,0.1) [label={[xshift=6pt, yshift=-7pt] \scriptsize{$11$}}] {};
  \node[vertex2] (v12) at (-1,-0.5) [label={[xshift=7pt, yshift=-11pt] \scriptsize{$12$}}] {};
  \node[vertex2] (v13) at (1,-0.75) [label={[xshift=6pt, yshift=-4pt] \scriptsize{$13$}}] {};
  \node[vertex2] (v14) at (0.5,1) [label=above: \scriptsize{$14$}] {};
  \node[vertex2] (v15) at (-1.5,0.8) [label=above: \scriptsize{$15$}] {};
  \node[vertex2] (v16) at (-3,-1.5) [label=below: \scriptsize{$16$}] {};
  \node[vertex2] (v17) at (-0.75,0.75) [label=above: \scriptsize{$17$}] {};
  \node[vertex2] (v18) at (-2.5,-0.25) [label=left: \scriptsize{$18$}] {};
  \node[vertex2] (v19) at (-2.5,-1) [label={[xshift=-7pt, yshift=-7pt] \scriptsize{$19$}}] {};
  \node[vertex2] (v20) at (-0.75,-1.75) [label=below: \scriptsize{$20$}] {};
  
  \draw[thick, blue] (v1) -- (v4);
  \draw[thick, green] (v1) -- (v6);
  \draw[thick, blue] (v1) -- (v9);
  \draw[thick, blue] (v1) -- (v18);
  
  \draw[thick, green] (v2) -- (v5);
  \draw[thick, green] (v2) -- (v6);
  \draw[thick, blue] (v2) -- (v13);
  
  \draw[thick, green] (v4) -- (v18);
  
  \draw[thick, blue] (v5) -- (v10);
  
  \draw[thick, blue] (v6) -- (v8);
  \draw[thick, green] (v6) -- (v13);
  \draw[thick, blue] (v6) -- (v15);
  
  \draw[thick, green] (v7) -- (v10);

  \draw[thick, green] (v8) -- (v14);
  
  \draw[thick, blue] (v9) -- (v12);

  \draw[thick, green] (v12) -- (v19);
  
  \draw[thick, blue] (v15) -- (v18);
  
  \draw[thick, green] (v16) -- (v19);
  \end{scope}

  \begin{scope}[shift={(-4.125,-7.5)}, scale=0.6]
  \node[vertex2_blue] (v1) at (-1.75,0) [label={[xshift=2pt, yshift=-2pt] \scriptsize{$1$}}] {};
  \node[vertex2_blue] (v2) at (0.75,-1.25) [label=left: \scriptsize{$2$}] {};
  \node[vertex2_deleted] (v3) at (2.25,0.25) [label=above: \scriptsize{$3$}] {};
  \node[vertex2_red] (v4) at (-2.75,0.75) [label=above: \scriptsize{$4$}] {};
  \node[vertex2_deleted] (v5) at (2,-0.75) [label=right: \scriptsize{$5$}] {};
  \node[vertex2_deleted] (v6) at (0,0) [label=above: \scriptsize{$6$}] {};
  \node[vertex2_red] (v7) at (0.5,-1.75) [label=below: \scriptsize{$7$}] {};
  \node[vertex2_deleted] (v8) at (1.5,0.6) [label=above: \scriptsize{$8$}] {};
  \node[vertex2_deleted] (v9) at (-1.25,-1.25) [label={[xshift=6pt, yshift=-7pt] \scriptsize{$9$}}] {};
  \node[vertex2_deleted] (v10) at (1.5,-1.75) [label=below: \scriptsize{$10$}] {};
  \node[vertex2_red] (v11) at (1.1,0.1) [label={[xshift=6pt, yshift=-7pt] \scriptsize{$11$}}] {};
  \node[vertex2_red] (v12) at (-1,-0.5) [label={[xshift=7pt, yshift=-11pt] \scriptsize{$12$}}] {};
  \node[vertex2_blue] (v13) at (1,-0.75) [label={[xshift=6pt, yshift=-4pt] \scriptsize{$13$}}] {};
  \node[vertex2_red] (v14) at (0.5,1) [label=above: \scriptsize{$14$}] {};
  \node[vertex2_red] (v15) at (-1.5,0.8) [label=above: \scriptsize{$15$}] {};
  \node[vertex2_deleted] (v16) at (-3,-1.5) [label=below: \scriptsize{$16$}] {};
  \node[vertex2_blue] (v17) at (-0.75,0.75) [label=above: \scriptsize{$17$}] {};
  \node[vertex2_red] (v18) at (-2.5,-0.25) [label=left: \scriptsize{$18$}] {};
  \node[vertex2_deleted] (v19) at (-2.5,-1) [label={[xshift=-7pt, yshift=-7pt] \scriptsize{$19$}}] {};
  \node[vertex2_deleted] (v20) at (-0.75,-1.75) [label=below: \scriptsize{$20$}] {};
  
  \draw[thick] (v1) -- (v4);
  \draw[thick] (v1) -- (v17);
  \draw[thick] (v1) -- (v18);
  
  \draw[thick] (v2) -- (v7);
  \draw[thick] (v2) -- (v13);
  
  \draw[thick] (v4) -- (v18);
  
  \draw[thick] (v11) -- (v14);
  
  \draw[thick] (v12) -- (v15);
  \draw[thick] (v12) -- (v17);
  
  \draw[thick] (v14) -- (v17);
  
  \draw[thick] (v15) -- (v18);
  \end{scope}
  
  \begin{scope}[shift={(0,-7.5)}, scale=0.6]
  \node[vertex2_deleted] (v1) at (-1.75,0) [label={[xshift=2pt, yshift=-2pt] \scriptsize{$1$}}] {};
  \node[vertex2_deleted] (v2) at (0.75,-1.25) [label=left: \scriptsize{$2$}] {};
  \node[vertex2_green] (v3) at (2.25,0.25) [label=above: \scriptsize{$3$}] {};
  \node[vertex2_red] (v4) at (-2.75,0.75) [label=above: \scriptsize{$4$}] {};
  \node[vertex2_green] (v5) at (2,-0.75) [label=right: \scriptsize{$5$}] {};
  \node[vertex2_green] (v6) at (0,0) [label=above: \scriptsize{$6$}] {};
  \node[vertex2_red] (v7) at (0.5,-1.75) [label=below: \scriptsize{$7$}] {};
  \node[vertex2_green] (v8) at (1.5,0.6) [label=above: \scriptsize{$8$}] {};
  \node[vertex2_green] (v9) at (-1.25,-1.25) [label={[xshift=6pt, yshift=-7pt] \scriptsize{$9$}}] {};
  \node[vertex2_green] (v10) at (1.5,-1.75) [label=below: \scriptsize{$10$}] {};
  \node[vertex2_red] (v11) at (1.1,0.1) [label={[xshift=6pt, yshift=-7pt] \scriptsize{$11$}}] {};
  \node[vertex2_red] (v12) at (-1,-0.5) [label={[xshift=7pt, yshift=-11pt] \scriptsize{$12$}}] {};
  \node[vertex2_deleted] (v13) at (1,-0.75) [label={[xshift=6pt, yshift=-4pt] \scriptsize{$13$}}] {};
  \node[vertex2_red] (v14) at (0.5,1) [label=above: \scriptsize{$14$}] {};
  \node[vertex2_red] (v15) at (-1.5,0.8) [label=above: \scriptsize{$15$}] {};
  \node[vertex2_green] (v16) at (-3,-1.5) [label=below: \scriptsize{$16$}] {};
  \node[vertex2_deleted] (v17) at (-0.75,0.75) [label=above: \scriptsize{$17$}] {};
  \node[vertex2_red] (v18) at (-2.5,-0.25) [label=left: \scriptsize{$18$}] {};
  \node[vertex2_green] (v19) at (-2.5,-1) [label={[xshift=-7pt, yshift=-7pt] \scriptsize{$19$}}] {};
  \node[vertex2_green] (v20) at (-0.75,-1.75) [label=below: \scriptsize{$20$}] {};
  
  \draw[thick] (v3) -- (v5);
  \draw[thick] (v3) -- (v8);
  
  \draw[thick] (v4) -- (v18);
  
  \draw[thick] (v5) -- (v10);
  \draw[thick] (v5) -- (v11);
  
  \draw[thick] (v6) -- (v8);
  \draw[thick] (v6) -- (v11);
  \draw[thick] (v6) -- (v12);
  \draw[thick] (v6) -- (v15);
  
  \draw[thick] (v7) -- (v10);
  \draw[thick] (v7) -- (v20);
  
  \draw[thick] (v8) -- (v14);
  
  \draw[thick] (v9) -- (v12);
  \draw[thick] (v9) -- (v20);
  
  \draw[thick] (v11) -- (v14);
  
  \draw[thick] (v12) -- (v15);
  \draw[thick] (v12) -- (v19);
  
  \draw[thick] (v15) -- (v18);
  
  \draw[thick] (v16) -- (v19);
  
  \draw[thick] (v18) -- (v19);
  
  \draw[thick] (v19) -- (v20);
  \end{scope}
  
  \begin{scope}[shift={(4.125,-7.5)}, scale=0.6]
  \node[vertex2_blue] (v1) at (-1.75,0) [label={[xshift=2pt, yshift=-2pt] \scriptsize{$1$}}] {};
  \node[vertex2_blue] (v2) at (0.75,-1.25) [label=left: \scriptsize{$2$}] {};
  \node[vertex2_green] (v3) at (2.25,0.25) [label=above: \scriptsize{$3$}] {};
  \node[vertex2_deleted] (v4) at (-2.75,0.75) [label=above: \scriptsize{$4$}] {};
  \node[vertex2_green] (v5) at (2,-0.75) [label=right: \scriptsize{$5$}] {};
  \node[vertex2_green] (v6) at (0,0) [label=above: \scriptsize{$6$}] {};
  \node[vertex2_deleted] (v7) at (0.5,-1.75) [label=below: \scriptsize{$7$}] {};
  \node[vertex2_green] (v8) at (1.5,0.6) [label=above: \scriptsize{$8$}] {};
  \node[vertex2_green] (v9) at (-1.25,-1.25) [label={[xshift=6pt, yshift=-7pt] \scriptsize{$9$}}] {};
  \node[vertex2_green] (v10) at (1.5,-1.75) [label=below: \scriptsize{$10$}] {};
  \node[vertex2_deleted] (v11) at (1.1,0.1) [label={[xshift=6pt, yshift=-7pt] \scriptsize{$11$}}] {};
  \node[vertex2_deleted] (v12) at (-1,-0.5) [label={[xshift=7pt, yshift=-11pt] \scriptsize{$12$}}] {};
  \node[vertex2_blue] (v13) at (1,-0.75) [label={[xshift=6pt, yshift=-4pt] \scriptsize{$13$}}] {};
  \node[vertex2_deleted] (v14) at (0.5,1) [label=above: \scriptsize{$14$}] {};
  \node[vertex2_deleted] (v15) at (-1.5,0.8) [label=above: \scriptsize{$15$}] {};
  \node[vertex2_green] (v16) at (-3,-1.5) [label=below: \scriptsize{$16$}] {};
  \node[vertex2_blue] (v17) at (-0.75,0.75) [label=above: \scriptsize{$17$}] {};
  \node[vertex2_deleted] (v18) at (-2.5,-0.25) [label=left: \scriptsize{$18$}] {};
  \node[vertex2_green] (v19) at (-2.5,-1) [label={[xshift=-7pt, yshift=-7pt] \scriptsize{$19$}}] {};
  \node[vertex2_green] (v20) at (-0.75,-1.75) [label=below: \scriptsize{$20$}] {};
  
  \draw[thick] (v1) -- (v6);
  \draw[thick] (v1) -- (v9);
  \draw[thick] (v1) -- (v17);
  \draw[thick] (v1) -- (v19);
  
  \draw[thick] (v2) -- (v5);
  \draw[thick] (v2) -- (v6);
  \draw[thick] (v2) -- (v13);
  
  \draw[thick] (v3) -- (v5);
  \draw[thick] (v3) -- (v8);
  
  \draw[thick] (v5) -- (v10);
  \draw[thick] (v5) -- (v13);
  
  \draw[thick] (v6) -- (v8);
  \draw[thick] (v6) -- (v13);
  
  \draw[thick] (v9) -- (v20);
  
  \draw[thick] (v16) -- (v19);
  
  \draw[thick] (v19) -- (v20);
  \end{scope}
  \end{tikzpicture}
  \caption[caption]{Illustrating the color-avoiding edge- and vertex-connectivity on an Erd\H{o}s-R\'enyi random graph $G(20, 0.15)$. The first row depicts an edge-colored and a vertex-colored graph with three equiprobable colors. The second and third rows show the graphs after removing the green, blue and red edges/vertices, respectively.
    \\\hspace{\textwidth}
    Note that vertex 5 and vertex 14 are color-avoiding edge-connected since the path 5-11-14 serves as both a green-avoiding and a blue-avoiding path, while the path 5-2-6-8-14 serves as a red-avoiding path. On the other hand, vertex 16 is not color-avoiding edge-connected to any other vertices since without green edges it gets isolated.
    \\\hspace{\textwidth}
    Considering color-avoiding vertex-connectivity, vertex 6 and vertex 10 are strongly (and therefore weakly) color-avoiding connected since the path 6-11-5-10 serves as a blue-avoiding path, the path 6-13-5-10 serves as a red-avoiding path, while the path 6-2-7-10 serves as a green-avoiding path for internal vertices -- this last condition is only required for strong color-avoiding connectivity. Vertex 3 and vertex 10 are weakly color-avoiding connected since the path 3-5-10 serves as both a red-avoiding and blue-avoiding path, while they are not strongly color-avoiding connected since no path from vertex 3 can avoid green vertices as internal nodes. Vertex 7 and vertex 11 are not strongly/weakly color-avoiding connected since no green-avoiding path exists between them.}
    \label{fig:colored_graph}
\end{figure}
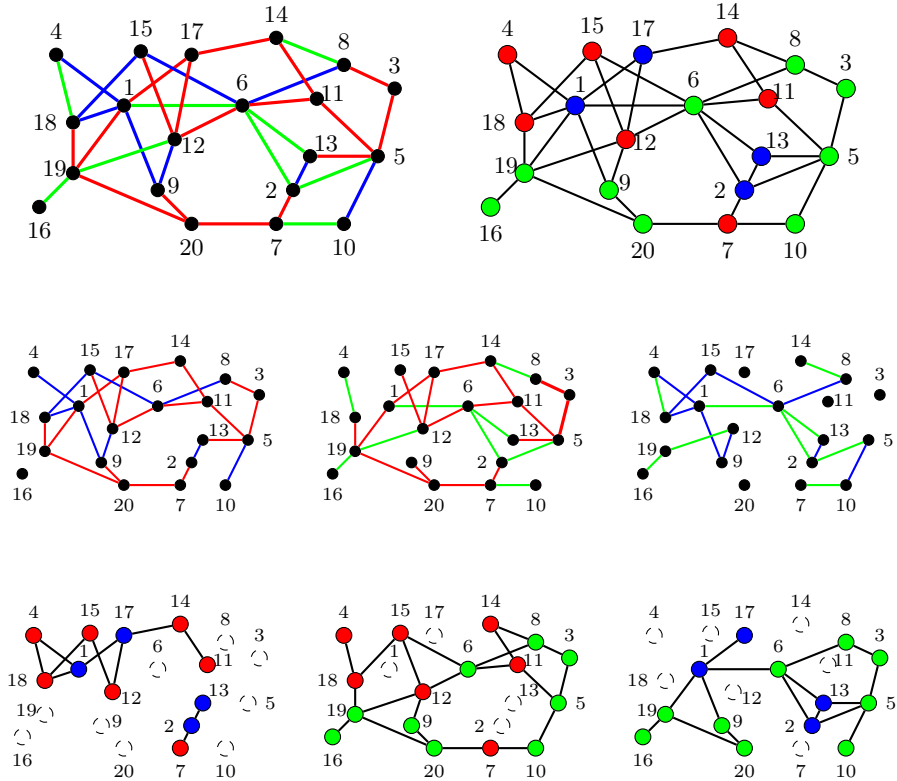

\subsubsection{Every edge has one color}

\begin{definition}
\label{def:edge_con}
Let $G$ be a graph, $C = \{ c_1, \ldots, c_k \}$ a color set and $c: \ E(G) \to C$ a function that assign colors to the edges.

For any $i \in \{ 1, \ldots, k \}$ the vertices $u,v \in V(G)$ are called \textbf{$\boldsymbol{c_i}$-avoiding edge-connected} if after the removal of the edges of color $c_i$, $u$ and $v$ are in the same component in the remaining graph, i.e there exists a path between $u$ and $v$ which does not contain any edges of color $c_i$ -- such a path is called a \textbf{$\boldsymbol{c_i}$-avoiding path}.

We say that the vertices $u,v \in V(G)$ are \textbf{color-avoiding edge-connected} if they are $c_i$-avoiding edge-connected for all $i \in \{ 1, \ldots, k\}$.
\end{definition}

%Let $G$ be a graph and let $\sim_G$ denote the following equivalence relation on $V(G)$. For any $u,v \in V(G)$, $u \sim_G v$ if and only if there is a path between $u$ and $v$ in $G$.

%Let $G=(V,E)$ be a graph and $c: \ E(G) \to \{ 1, \ldots, k \}$ a function. For any $i \in \{ 1, \ldots, k \}$, let
%$$ E_i = \{ e \in E(G) \mid c(e) \ne i \} $$
%and let $G_i = (V, E_i)$.

%We say that $u,v \in V(G)$ are color-avoiding edge-connected if $u \sim_{G_i} v$ for any $i \in \{ 1, \ldots, k \}$, i.e.~after the removal of the edges of any color they are in the same component in the remaining graph.

The relation of color-avoiding edge-connectivity (see Fig. \ref{edgecoloring_example}) is an equivalence relation and thus it defines a partition of the vertex set. The equivalence classes are called \emph{color-avoiding edge-connected components}.

 \begin{figure}
  \centering
  \begin{tikzpicture}
  \tikzstyle{vertex}=[draw,circle,fill,minimum size=5,inner sep=0]
  
  \node[vertex] (v1) at (90:0.5) [label=above: $v_1$] {};
  \node[vertex] (v2) at (210:0.5) [label=below left: $v_2$] {};
  \node[vertex] (v3) at (330:0.5) [label=below right: $v_3$] {};
  
  \draw[very thick, blue] (v1) -- (v2) -- (v3);
  \draw[very thick, red] (v1) -- (v3);
  \end{tikzpicture}
  \caption{The vertices $v_2$ and $v_3$ are color-avoiding edge-connected. But $v_1$ and $v_2$ are not: the removal of the blue edges disconnects them.}
  \label{edgecoloring_example}
 \end{figure}
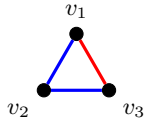

\subsubsection{Multiple edges}

A natural modification of the previous definition is if we allow for multiple edges, meaning that there can be more types of connection between two nodes (see Fig. \ref{multiple_edges}). Multiple edges make the network less vulnerable: if there are at least two edges of different colors 
 between two nodes, then they are color-avoiding edge-connected since their connection cannot be destroyed by attacking only one color at a time.
 
 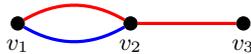
\begin{figure}
  \centering
  \begin{tikzpicture}
  \tikzstyle{vertex}=[draw,circle,fill,minimum size=5,inner sep=0]
  
  \node[vertex] (v1) at (0,0) [label=below: $v_1$] {};
  \node[vertex] (v2) at (1.5,0) [label=below: $v_2$] {};
  \node[vertex] (v3) at (3,0) [label=below: $v_3$] {};
  
  \draw[very thick, red] (v1) to [bend left=30] (v2);
  \draw[very thick, blue] (v1) to [bend right=30] (v2);
  \draw[very thick, red] (v2) -- (v3);
  \end{tikzpicture}
  \caption{The vertices $v_1$ and $v_2$ are color-avoiding edge-connected. But $v_2$ and $v_3$ are not: the removal of the red edges disconnects them.}
  \label{multiple_edges}
 \end{figure}

\subsubsection{Every edge has a list of colors}

Another possible generalization of the framework is to modify Def. \ref{def:edge_con} in such a way that we make the edges more sensitive to attack by assigning a list of colors to the vertices representing all the vulnerabilities that an edge has and an edge is destroyed whenever one of its colors is attacked (see Fig. \ref{edges_lists}). Formally, the function $c$ is modified:
$c: \ E(G) \to 2^C$, where $2^C$ is the power set of the color set $C=\{ c_1, \ldots, c_k \}$. Furthermore, in this scenario we say that the vertices $u,v \in V(G)$ are $c_i$-avoiding edge-connected if after the removal of the edges that contain $c_i$ on their lists of colors, $u$ and $v$ are in the same component in the remaining graph.

%\begin{figure}[H]
%  \centering
%  \begin{tikzpicture}
%  \tikzstyle{vertex}=[draw,circle,fill,minimum size=5,inner sep=0]
%  
%  \node[vertex] (v1) at (90:1) [label=above: $v_1$] {};
%  \node[vertex] (v2) at (210:1) [label=below left: $v_2$] {};
%  \node[vertex] (v3) at (330:1) [label=below right: $v_3$] {};
%  
%  \draw[very thick] (v1) -- (v2) node[pos=0.5, above, sloped] {\small{red, blue}};
%  \draw[very thick] (v1) -- (v3) node[pos=0.5, above, sloped] {\small{red, green}};
%  \draw[very thick] (v2) -- (v3) node[pos=0.5, below, sloped] {\small{yellow}};
%  \end{tikzpicture}
%  \caption{The vertices $v_2$ and $v_3$ are color-avoiding edge-connected. But $v_1$ and $v_2$ are not: the removal of the red edges disconnects them.}
%  \label{edges_lists}
% \end{figure}
 
 \begin{figure}
  \centering
  \begin{tikzpicture}
  \tikzstyle{vertex}=[draw,circle,fill,minimum size=5,inner sep=0]
  
  \node[vertex] (v1) at (90:1.5) [label=above: $v_1$] {};
  \node[vertex] (v2) at (180:1.5) [label=left: $v_2$] {};
  \node[vertex] (v3) at (270:1.5) [label=below: $v_3$] {};
  \node[vertex] (v4) at (0:1.5) [label=right: $v_4$] {};
  
  \draw[very thick] (v1) -- (v2) node[pos=0.5, above, sloped] {\small{\textcolor{red}{red}, \textcolor{blue}{blue}}};
  \draw[very thick] (v1) -- (v4) node[pos=0.5, above, sloped] {\small{\textcolor{blue}{blue}}};
  \draw[very thick] (v2) -- (v3) node[pos=0.5, below, sloped] {\small{\textcolor{red}{red}, \textcolor{green}{green}}};
  \draw[very thick] (v2) -- (v4) node[pos=0.5, above, sloped] {\small{\textcolor{blue}{blue}, \textcolor{green}{green}}};
  \draw[very thick] (v3) -- (v4) node[pos=0.5, below, sloped] {\small{\textcolor{green}{green}}};
  \end{tikzpicture}
  \caption{The vertices $v_2$ and $v_4$ are color-avoiding edge-connected: a red-, blue- and green-avoiding path between them are $v_2 v_4$, $v_2 v_3 v_4$ and $v_2 v_1 v_4$, respectively. But $v_1$ and $v_2$ are not: the removal of the edges containing the color blue on their lists disconnects them.}
  \label{edges_lists}
 \end{figure}
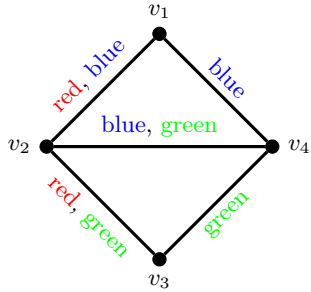

\subsection{Coloring the vertices}

It is also interesting to consider the case when the colors are assigned to vertices that are exposed to attack or failure while the edges  remain indistinguishable and unharmed. Possible real-world scenarios of having vulnerable classes of nodes include AS-level Internet with routers registered in different countries, telecommunication networks with transmission towers operated by different providers. The color of the nodes can also represent e.g. ownership, geographical location, dependence on a critical material \cite{krause2016hidden}. The strong/weak vertex color-avoiding connectivity is illustrated in Fig. \ref{fig:colored_graph} on an Erd\H{o}s-R\'enyi random graph.

\subsubsection{Strong color-avoiding connectivity}

\begin{definition}[\cite{krause2016hidden}]
\label{def:strong}
Let $G$ be a graph, $C = \{ c_1, \ldots, c_k \}$ a color set and $c: V(G) \to C$ a function that assigns colors to the vertices.

For any $i \in \{ 1, \ldots, k \}$ the vertices $u, v \in V(G)$ are called \textbf{strongly $\boldsymbol{c_i}$-avoiding vertex-connected} (or \textbf{strongly $\boldsymbol{c_i}$-avoiding connected}) if after the removal of the vertices of color $c_i$ excluding these two vertices, they are in the same component in the remaining graph, i.e.~there exists a path between $u$ and $v$ whose internal vertices are not of color $c_i$ -- such a path is called a \textbf{$\boldsymbol{c_i}$-avoiding path}.

We say that the vertices $u,v \in V(G)$ are \textbf{strongly color-avoiding vertex-connected} (or \textbf{strongly color-avoiding connected}) if they are strongly $c_i$-avoiding connected for all $i \in \{ 1, \ldots, k \}$.
\end{definition}

In this case the relation of strong color-avoiding connectivity is not transitive, therefore it is not an equivalence relation (see Fig. \ref{weak&strong_example}). The \emph{strongly color-avoiding connected components} are maximal sets of vertices such that any two of them are strongly color-avoiding connected.

\subsubsection{More colors on the vertices} Similarly to the case when the edges were colored, here we can also assign multiple colors to vertices. One can think of the multiple colors (lists of colors) as multiple vulnerabilities on the nodes making the network less robust. Another approach is to consider the scenario analogously to multiple edges. Here it is important to note that if a node has at least two different colors, it makes the vertex immortal under the previously mentioned color attacks analogously to the "multiple edges" approach.

\subsubsection{Weak color-avoiding connectivity}

Contrary to color-avoiding edge-con\-nec\-tiv\-ity, if we color the vertices, it is dubious how to handle the source and target nodes in the definition of color-avoiding vertex-connectivity. In Def. \ref{def:strong} a possible approach was presented that can capture several realistic scenarios. Considering eavesdropping, it is reasonable that the sender and receiver guarantee the security of the message but vulnerability may affect the nodes as transmitters. However, when the attack of the vertices rather means destroying the entities, it is more natural to consider another approach to define color-avoiding vertex-connectivity: attacking red vertices makes it pointless which nodes a red vertex can reach on a path without other red vertices. This scenario is captured by the concept of weak color-avoiding connectivity (see Fig. \ref{weak&strong_example}).

\begin{definition}
\label{def:weak}
Let $G$ be a connected graph, $C=\{ c_1, \ldots, c_k \}$ a color set and $c:~ V(G) \to C$ a function that assigns colors to the vertices. 

For any $i \in \{ 1, \ldots, k \}$ the vertices $u, v \in V(G)$ are called \textbf{weakly $\boldsymbol{c_i}$-avoiding vertex-connected} (or \textbf{weakly $\boldsymbol{c_i}$-avoiding connected}) if after the removal of the vertices of color $c_i$, either at least one of $u$ or $v$ is deleted or they are in the same component in the remaining graph, i.e.~if neither $u$ nor $v$ are of color $c_i$, there exists a path between them whose vertices are not of color $c_i$ -- such a path is called a \textbf{$\boldsymbol{c_i}$-avoiding path}.

We say that the vertices $u, v \in V(G)$ are \textbf{weakly color-avoiding vertex-connected} (or \textbf{weakly color-avoiding connected}) if they are weakly $c_i$-avoid\-ing connected for all $i \in \{ 1, \ldots, k \}$.
\end{definition}

Similarly to the strong case, weak color-avoiding connectivity is also not an equivalence relation (see Fig. \ref{weak&strong_example}). The \emph{weakly color-avoiding connected components} are maximal sets of vertices such that any two of them are weakly color-avoiding connected.

We can extend the definition to non-connected graphs with the extra condition that two vertices can be weakly color-avoiding connected only if they are in the same component in the original graph.

\begin{remark}
The notion of weakly color-avoiding connectivity is indeed a weaker concept than the one defined in Def. \ref{def:strong}. It is easy to see that if two vertices are strongly color-avoiding connected then it implies that they are weakly color-avoiding connected as well.
\end{remark}

\begin{figure}
  \centering
  \begin{tikzpicture}
  \tikzstyle{vertex_red}=[draw,circle,fill=red,minimum size=10,inner sep=0]
  \tikzstyle{vertex_blue}=[draw,circle,fill=blue,minimum size=10,inner sep=0]
  
  \node[vertex_red] (v1) at (0,0) [label=below: $v_1$] {};
  \node[vertex_blue] (v2) at (1,0) [label=below: $v_2$] {};
  \node[vertex_red] (v3) at (2,0) [label=below: $v_3$] {};
  \node[vertex_red] (v4) at (3,0) [label=below: $v_4$] {};
  \node[vertex_red] (v5) at (4,0) [label=below: $v_5$] {};
  
  \draw[thick] (v1) -- (v2) -- (v3) -- (v4) -- (v5);
  \end{tikzpicture}
  \caption[caption]{The vertices $v_1$ and $v_2$ are weakly/strongly color-avoiding connected, and so are the vertices $v_2$ and $v_3$. But $v_1$ and $v_3$ are not: the removal of the blue vertices (i.e. the removal of $v_2$) disconnects them. The vertices $v_3$ and $v_5$ are weakly but not strongly color-avoiding connected.\\\hspace{\textwidth}
  The above observation also shows that neither the strong nor the weak color-avoiding connectivity is a transitive relation.}
  \label{weak&strong_example}
 \end{figure}
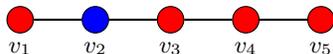

\subsubsection{More colors on the vertices}

Weak color-avoiding connectivity can be extended to multiple colors as well, in the exact same manner as strong color-avoiding connectivity.

\subsubsection{Other generalizations}

It is worth mentioning that other generalizations have been also proposed. Krause et al. \cite{krause2017color} consider nodes with differentiated functions, either as senders/receivers or transmitters. They introduce a flexible trust scenario where vertices can be trusted or avoided in both functions. Trusting colors for transmission naturally increases color-avoiding connectivity \cite{krause2017color}.

\section{Computational complexity of finding the color-avoiding components}
\label{complex}

After presenting the problem of color-avoiding percolation and various modeling approaches, in this section we analyze the computational complexity of finding the robustly (color-avoiding) connected components considering the different problem definitions. In the following we
assume the reader's acquaintance with standard concepts of computational
complexity theory that may be found e.g. in \cite{sipser2012introduction}. We will use in this section that the following well-known decision problem is NP-complete \cite{sipser2012introduction}.

\medskip

\noindent {\bf \scshape Clique} \\
 \textit{Instance:} a graph $G$ and a positive integer $l$. \\
 \textit{Question:} does $G$ have a clique of size at least $l$?
 
 \medskip

Now we list the decision problems for which the computational complexity will be presented in this section. Although the problems are seemingly very similar, they differ considerably concerning their complexity. 

\medskip

\noindent {\bf \scshape ColorAvoidingEdgeConnectedComponent} \\
 \textit{Instance:} a graph $G$, a color set $C=\{ c_1, \ldots, c_k \}$, a function $c:~E(G) \to C$ and a positive integer $l$. \\
 \textit{Question:} is it true that $G$ has a color-avoiding edge-connected component of size at least $l$?
 
 \medskip
 
 \noindent {\bf \scshape StronglyColorAvoidingConnectedComponent} \\
 \textit{Instance:} a graph $G$, a color set $C=\{ c_1, \ldots, c_k \}$, a function $c:~V(G) \to C$ and a positive integer $l$. \\
 \textit{Question:} is it true that $G$ has a strongly color-avoiding connected component of size at least $l$?
 
 \medskip
 
 \noindent {\bf \scshape WeaklyColorAvoidingConnectedComponent} \\
 \textit{Instance:} a graph $G$, a color set $C=\{ c_1, \ldots, c_k \}$, a function $c:~V(G) \to C$ and a positive integer $l$. \\
 \textit{Question:} is it true that $G$ has a weakly color-avoiding connected component of size at least $l$?

\medskip
 
 \noindent {\bf \scshape WeaklyColorAvoidingConnectedComponent-ListOfColors} \\
 \textit{Instance:} a graph $G$, a color set $C=\{ c_1, \ldots, c_k \}$, a function $c:~V(G) \to 2^C$ and a positive integer $l$. \\
 \textit{Question:} is it true that $G$ has a weakly color-avoiding connected component of size at least $l$?

\medskip

First we prove that the color-avoiding edge-connected components can be found in polynomial time.

\begin{theorem}
 %Let $G$ be a graph and $c:~ E(G) \to \{ 1, \ldots, k \}$ a function. The color-avoiding components of $G$ can be found in polynomial time.
 The problem {\scshape ColorAvoidingEdgeConnectedComponent} is in P. More precisely, the color-avoiding edge-connected components of $G$ can be found in polynomial time.
\end{theorem}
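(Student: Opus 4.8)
The plan is to exploit the fact that, for an edge colouring, the condition of color-avoiding edge-connectivity factors into $k$ independent ordinary connectivity conditions, one per color, because deleting all edges of one color has no bearing on which edges of the other colors survive. First, for each color $c_i \in C$, form the subgraph $G_i := \bigl(V(G), \{ e \in E(G) : c(e) \neq c_i \}\bigr)$ obtained from $G$ by removing every edge of color $c_i$, and compute the connected components of $G_i$ (for instance by breadth-first search, in time $O(|V(G)| + |E(G)|)$ per color). Assign to every vertex $v$ a label $\ell_i(v)$ that identifies the component of $G_i$ containing $v$. By Definition~\ref{def:edge_con}, two vertices $u, v$ are $c_i$-avoiding edge-connected if and only if $\ell_i(u) = \ell_i(v)$, hence $u$ and $v$ are color-avoiding edge-connected if and only if the vectors $\bigl(\ell_1(v), \ldots, \ell_k(v)\bigr)$ coincide for $u$ and for $v$.

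The second step is to group the vertices of $G$ according to this vector of labels, e.g.\ by sorting the $|V(G)|$ vectors of length $k$ lexicographically (or via radix sort). Each resulting group is then exactly one color-avoiding edge-connected component; in particular this also re-derives the earlier remark that color-avoiding edge-connectivity is an equivalence relation, since its classes are precisely the blocks of the common refinement of the $k$ partitions of $V(G)$ into connected components of $G_1, \ldots, G_k$. To answer the decision problem {\scshape ColorAvoidingEdgeConnectedComponent}, it then suffices to compare the size of the largest group with the given integer $l$.

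For the running time, the $k$ breadth-first searches cost $O\bigl(k \cdot (|V(G)| + |E(G)|)\bigr)$, labelling the vertices costs $O(k \cdot |V(G)|)$, and sorting the label vectors costs $O\bigl(k \cdot |V(G)| \log |V(G)|\bigr)$; all of this is polynomial in the size of the input, which establishes membership in P and simultaneously exhibits the components themselves.

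I do not expect a genuine obstacle in this argument; the only point that needs care is justifying the reduction to $k$ plain connectivity computations, namely that the $k$ sub-problems are truly independent. This is where the edge-coloured model is genuinely easier than the vertex-coloured ones: there is no analogue of the delicate treatment of the endpoints $u, v$ that distinguishes the strong and weak vertex variants, so nothing couples the colors together and the common-refinement description is exact.
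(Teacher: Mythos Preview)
Your argument is correct. Both your proof and the paper's hinge on the same observation---that color-avoiding edge-connectivity is an equivalence relation---but they package it differently. The paper builds an auxiliary graph $G'$ on $V(G)$ in which two vertices are joined precisely when they are color-avoiding edge-connected, then notes that the equivalence-relation property forces $G'$ to be $P_3$-free, so its connected components (which are cliques) are exactly the desired components. You bypass $G'$ entirely and instead describe the equivalence classes directly as the blocks of the common refinement of the $k$ partitions into components of $G_1,\ldots,G_k$, computed via label vectors and sorting. Your route is more explicitly algorithmic and yields a sharper running-time bound; the paper's route has the advantage of setting up the same auxiliary-graph template it later reuses for the vertex-coloured case (where $G'$ is shown to be locally chordal rather than $P_3$-free), giving a uniform presentation across the different models.
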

\begin{proof}
 Let $G'$ be a graph on the vertex set of $G$ where two vertices are connected if and only if they are color-avoiding edge-connected (for an example see Fig. \ref{G'_edgecoloring}). Obviously, $G'$ can be constructed in polynomial time: we need to check for every pair of vertices whether they remain in the same component after erasing the edges of each color separately.
 
 \begin{figure}
  \centering
  \begin{tikzpicture}
  \tikzstyle{vertex}=[draw,circle,fill,minimum size=5,inner sep=0]
  
  \node at (-1.5,1.25) {$G$};
  
  \node[vertex] (v1) at (90:0.75) [label=above: $v_1$] {};
  \node[vertex] (v2) at (150:0.75) [label=left: $v_2$] {};
  \node[vertex] (v3) at (210:0.75) [label=left: $v_3$] {};
  \node[vertex] (v4) at (270:0.75) [label=below: $v_4$] {};
  \node[vertex] (v5) at (330:0.75) [label=right: $v_5$] {};
  \node[vertex] (v6) at (30:0.75) [label=right: $v_6$] {};
  
  \draw[very thick, red] (v1) -- (v2);
  \draw[very thick, blue] (v1) -- (v3) -- (v2);
  \draw[very thick, blue] (v4) -- (v5);
  \draw[very thick, red] (v4) -- (v6) -- (v5);
  \draw[very thick, red] (v1) -- (v6);
  \draw[very thick, red] (v3) -- (v4);
  
  \begin{scope}[shift={(5,0)}]
  \node at (-1.5,1.25) {$G'$};
  
  \node[vertex] (v1) at (90:0.75) [label=above: $v_1$] {};
  \node[vertex] (v2) at (150:0.75) [label=left: $v_2$] {};
  \node[vertex] (v3) at (210:0.75) [label=left: $v_3$] {};
  \node[vertex] (v4) at (270:0.75) [label=below: $v_4$] {};
  \node[vertex] (v5) at (330:0.75) [label=right: $v_5$] {};
  \node[vertex] (v6) at (30:0.75) [label=right: $v_6$] {};
  
  \draw[very thick] (v1) -- (v2) -- (v3) -- (v1);
  \draw[very thick] (v4) -- (v5);
  \end{scope}
  \end{tikzpicture}
  \caption{Two vertices are adjacent in $G'$ if and only if they are color-avoiding edge-connected in $G$.}
  \label{G'_edgecoloring}
 \end{figure}
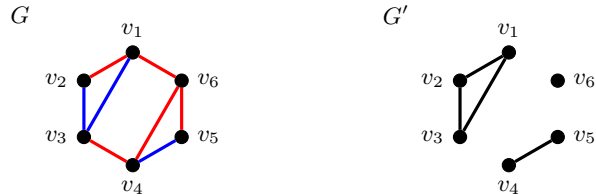
 
 Since the color-avoiding edge-connectivity is an equivalence relation, the graph $G'$ is $P_3$-free, i.e.~it cannot contain a path on 3 vertices as an induced subgraph. Obviously, the color-avoiding edge-connected components of $G$ are exactly the maximal cliques of $G'$. 
 
 It is easy to see that the components of a $P_3$-free graph are cliques, therefore the maximal cliques of $G'$ are its components. Hence, the color-avoiding edge-connected components of $G$ can be found in polynomial time.
\end{proof}

The above theorem obviously can be applied when there are multiple edges or when lists of colors are associated with the edges. Clearly, the same proof works in both cases.

Now, we move on to the analysis of color-avoiding vertex percolation. First, we prove that the stronger definition (Def. \ref{def:strong}) leads to an NP-complete problem.

\begin{theorem}
 \ The \ problem \ {\scshape StronglyColorAvoidingConnectedComponent} is NP-complete.
\end{theorem}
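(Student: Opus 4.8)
The plan is to prove both containment in NP and NP-hardness, with essentially all the work in the second part. Membership in NP is immediate: a certificate is a vertex set $S$ with $|S|\ge l$, and verifying that the vertices of $S$ are pairwise strongly color-avoiding connected amounts to running, for each of the $\binom{|S|}{2}$ pairs $u,v\in S$ and each color $c_i$, a single connectivity test on the graph obtained from $G$ by deleting the vertices of color $c_i$ other than $u$ and $v$ --- polynomially many such tests.

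For hardness I would reduce from {\scshape Clique}. The key idea is to transform an instance $(G,l)$ into a colored graph $(H,C,c)$ whose strong color-avoiding connectivity relation, viewed as a graph $H'$ on $V(H)$, is a trivial modification of $G$; since by the definitions in Section~\ref{model} the strongly color-avoiding connected components of $H$ are precisely the maximal cliques of $H'$, the maximum component size of $H$ then reads off $\omega(G)$. Concretely I would take $V(H)=V(G)\cup\{w,b\}$ with $w,b$ two new vertices, $E(H)=E(G)\cup\{wx:x\in V(G)\}\cup\{wb\}$, the color set $C=\{c_1,c_2\}$, and the coloring that gives every vertex of $V(G)\cup\{w\}$ the color $c_1$ and gives $b$ the color $c_2$. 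This choice is dictated by two effects: because $w$ is universal, for any two vertices the path $u\,w\,v$ is $c_2$-avoiding, so $c_2$-avoiding connectivity holds automatically and only the $c_1$-avoiding condition can fail; and because every vertex of $V(G)$ has color $c_1$, none of them may be an internal vertex of a $c_1$-avoiding path, so for $u,v\in V(G)$ the only possible $c_1$-avoiding path is the edge $uv$ itself. The extra vertex $b$ --- the unique vertex of color $c_2$ --- plays no structural role beyond keeping $H$ connected after the $c_1$-vertices are removed (and making the construction use both colors).

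The part requiring actual verification, and the main --- though still routine --- obstacle, is to establish three facts: (i) for distinct $u,v\in V(G)$, $u$ and $v$ are strongly color-avoiding connected in $H$ if and only if $uv\in E(G)$; (ii) $w$ is strongly color-avoiding connected to every vertex of $V(G)\cup\{b\}$; and (iii) $b$ is strongly color-avoiding connected to no vertex of $V(G)$. Each is proved by explicitly describing the graph that survives after deleting all color-$c_1$ (respectively color-$c_2$) vertices except a chosen pair and reading off its connectivity; the only delicate point is the $c_1$-case of (i), where one checks that the surviving graph consists of exactly the two endpoints together with the isolated vertex $b$, so that connectivity there coincides with adjacency in $G$. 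Given (i)--(iii), $H'$ is $G$ with a universal vertex $w$ adjoined and a pendant $b$ attached at $w$, so its maximal cliques are exactly $\{w,b\}$ and the sets $\{w\}\cup K$ with $K$ a maximal clique of $G$; hence, assuming $G$ is nonempty, the largest strongly color-avoiding connected component of $H$ has size $\omega(G)+1$. Consequently $(G,l)$ is a yes-instance of {\scshape Clique} if and only if $(H,C,c,l+1)$ is a yes-instance of {\scshape StronglyColorAvoidingConnectedComponent}, and since $H$, $c$, and $l+1$ are computable from $(G,l)$ in polynomial time, this yields the desired reduction.
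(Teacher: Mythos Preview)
Your proof is correct, but it is considerably more elaborate than necessary. The paper's reduction is a one-liner: take the \emph{same} graph $G$ with a single color $c_1$ on every vertex. Then for any two vertices $u,v$, deleting the $c_1$-colored vertices other than $u$ and $v$ leaves only $\{u,v\}$, so $u$ and $v$ are strongly $c_1$-avoiding connected iff $uv\in E(G)$; the strongly color-avoiding connected components are therefore exactly the maximal cliques of $G$, and the instance $(G,\{c_1\},c,l)$ is a yes-instance iff $\omega(G)\ge l$.

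Your construction achieves the same effect at the cost of two extra vertices, a second color, and the case analysis (i)--(iii). The only thing it buys is a reduction that uses at least two colors --- which would matter if one wanted to show hardness already for $|C|\ge 2$, but the problem as stated allows $k=1$, so this refinement is not needed. Your verification of (i)--(iii) is accurate (in particular your observation that in the $c_1$-case the surviving graph is $\{u,v\}$ plus the isolated $b$), and the conclusion that the maximum component size in $H$ is $\omega(G)+1$ is right.
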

\begin{proof}
 Obviously, this problem is in NP: a witness is a strongly color-avoiding connected component of size at least $l$. To show that this problem is NP-hard we reduce {\scshape Clique} to it.
 
 If we use only one color, then by definition the strongly color-avoiding connected components of $G$ are exactly its maximal cliques, therefore our problem is indeed NP-complete.
\end{proof}

Next, we present that using the weak definition of color-avoiding connectivity (Def. \ref{def:weak}) the connected components can be found in polynomial time. The proof consists of two main parts. First, we show that finding the weakly color-avoiding connected components in any graph is equivalent to finding the cliques of an associated locally chordal graph. This together with the fact that cliques can be found in polynomial time in a locally chordal graph gives us the desired result.

\begin{theorem}
 Let $G$ be a graph, $C=\{ c_1, \ldots, c_k \}$ a set of colors and $c:~ V(G) \to C$ a function that assigns colors to the vertices. Let $G'$ be a graph on the vertex set of $G$ where two vertices are connected if and only if they are weakly color-avoiding connected. Then the graph $G'$ is locally chordal, i.e.~the neighborhood of any vertex cannot contain an induced cycle of length at least 4.
 \label{G'locchordal}
\end{theorem}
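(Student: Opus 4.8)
The plan is to argue by contradiction. Suppose some vertex $w$ is such that the subgraph of $G'$ induced by its neighbourhood $N_{G'}(w)$ contains an induced cycle $v_1 v_2 \cdots v_m v_1$ with $m \ge 4$; I will derive a contradiction using only the definition of weak color-avoiding connectivity. For a colour $c_i$ write $G_i$ for the graph obtained from $G$ by deleting every vertex of colour $c_i$. The facts I will exploit are that $v_1 v_2$, $v_2 v_3$, $v_3 v_4$ are edges of $G'$ and that $v_1 v_3$ and $v_2 v_4$ are non-edges of $G'$ (these hold because the cycle is induced and $m \ge 4$), together with the fact that $w$ is adjacent in $G'$ to all of $v_1, v_2, v_3, v_4$.

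The engine of the proof is a one-line ``transitivity across an edge'' observation: if $x$ and $y$ both lie outside colour class $c_i$ and are in different components of $G_i$, and if a third vertex $z$ is weakly $c_i$-avoiding connected both to $x$ and to $y$, then $z$ must have colour $c_i$ --- otherwise Definition~\ref{def:weak} would supply a $c_i$-avoiding path from $x$ to $z$ and one from $z$ to $y$, and their concatenation would be a walk inside $G_i$ joining $x$ to $y$, a contradiction. Now, since $v_1 \not\sim_{G'} v_3$, there is a colour $c_a$ with $v_1, v_3 \notin c^{-1}(c_a)$ that separates $v_1$ from $v_3$ in $G_a$. Applying the observation with $(x,y)=(v_1,v_3)$ and $z = v_2$ (legitimate because $v_1 \sim_{G'} v_2 \sim_{G'} v_3$) forces $c(v_2)=c_a$, and applying it with $z = w$ (legitimate because $v_1 \sim_{G'} w \sim_{G'} v_3$, and this is the only place where it matters that $w$ is a \emph{common} neighbour) forces $c(w)=c_a$. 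Symmetrically, since $v_2 \not\sim_{G'} v_4$ there is a colour $c_b$ with $v_2, v_4 \notin c^{-1}(c_b)$ separating $v_2$ from $v_4$ in $G_b$, and the same observation applied to $(x,y)=(v_2,v_4)$ with $z=v_3$ and with $z=w$ yields $c(v_3)=c_b$ and $c(w)=c_b$.

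Finally, because $c$ assigns a single colour to each vertex, $c(w)=c_a$ and $c(w)=c_b$ force $c_a=c_b$; but then $c(v_2)=c_a=c_b$, contradicting $v_2 \notin c^{-1}(c_b)$. Hence no induced cycle of length at least $4$ can sit inside a neighbourhood of $G'$, i.e.\ $G'$ is locally chordal. I expect the only genuinely delicate point to be the correct bookkeeping around the ``trivial'' case of weak $c_i$-avoiding connectivity --- the case where one endpoint is itself coloured $c_i$ --- since it is precisely this clause that lets an intermediate vertex such as $v_2$ carry the forbidden colour while still being joined by an edge of $G'$, and it is also what makes the common neighbour $w$ carry real information (two forbidden colours at once) rather than being vacuous; everything else is routine, namely checking that $m \ge 4$ makes both $v_1 v_3$ and $v_2 v_4$ non-edges and that concatenating two $c_i$-avoiding paths really produces a connecting walk inside $G_i$.
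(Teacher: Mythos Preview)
Your proof is correct and follows essentially the same approach as the paper's: both suppose an induced wheel with centre $w$ (paper: $u$) and outer cycle $v_1,\dots,v_m$ (paper: $w_1,\dots,w_l$), use the non-edge $v_1v_3$ together with the intermediaries $v_2$ and $w$ to force $c(v_2)=c(w)=c_a$, and then use the non-edge $v_2v_4$ with intermediary $w$ to obtain the contradiction. The only cosmetic differences are that you package the key step as a reusable ``transitivity across an edge'' observation and finish via $c_a=c_b$ clashing with $v_2\notin c^{-1}(c_b)$, whereas the paper argues directly that $w_2$ and $w_4$ are weakly $c_i$-avoiding connected for every $i$ (your derivation $c(v_3)=c_b$ is correct but unused).
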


\noindent For an example on the construction of graph  $G'$ from Theorem \ref{G'locchordal}  see Fig. \ref{G'_weak}.

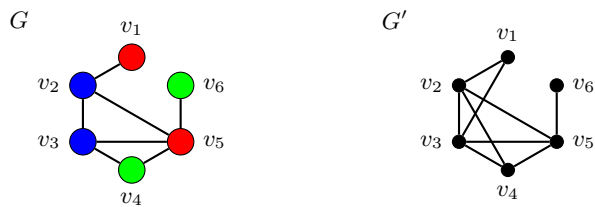
\begin{figure}
  \centering
  \begin{tikzpicture}
  \tikzstyle{vertex}=[draw,circle,fill,minimum size=5,inner sep=0]
  \tikzstyle{vertex_red}=[draw,circle,fill=red,minimum size=10,inner sep=0]
  \tikzstyle{vertex_blue}=[draw,circle,fill=blue,minimum size=10,inner sep=0]
   \tikzstyle{vertex_green}=[draw,circle,fill=green,minimum size=10,inner sep=0]
  \node at (-1.5,1.25) {$G$};
  
  \node[vertex_red] (v1) at (90:0.75) [label=above: $v_1$] {};
  \node[vertex_blue] (v2) at (150:0.75) [label=left: $v_2$] {};
  \node[vertex_blue] (v3) at (210:0.75) [label=left: $v_3$] {};
  \node[vertex_green] (v4) at (270:0.75) [label=below: $v_4$] {};
  \node[vertex_red] (v5) at (330:0.75) [label=right: $v_5$] {};
  \node[vertex_green] (v6) at (30:0.75) [label=right: $v_6$] {};
  
  \draw[thick] (v1) -- (v2) -- (v3) -- (v4) -- (v5) -- (v6);
  \draw[thick] (v3) -- (v5) -- (v2);
  
  \begin{scope}[shift={(5,0)}]
  \node at (-1.5,1.25) {$G'$};
  
  \node[vertex] (v1) at (90:0.75) [label=above: $v_1$] {};
  \node[vertex] (v2) at (150:0.75) [label=left: $v_2$] {};
  \node[vertex] (v3) at (210:0.75) [label=left: $v_3$] {};
  \node[vertex] (v4) at (270:0.75) [label=below: $v_4$] {};
  \node[vertex] (v5) at (330:0.75) [label=right: $v_5$] {};
  \node[vertex] (v6) at (30:0.75) [label=right: $v_6$] {};
  
  \draw[thick] (v1) -- (v2) -- (v3) -- (v4) -- (v5) -- (v6);
  \draw[thick] (v3) -- (v5) -- (v2);
  \draw[thick] (v1) -- (v3);
  \draw[thick] (v2) -- (v4);
  \end{scope}
  \end{tikzpicture}
  \caption{Two vertices are adjacent in $G'$ if and only if they are weakly color-avoiding connected in $G$.}
  \label{G'_weak}
 \end{figure}

\begin{proof}
We note that throughout this proof the notion "color-avoiding" always stands for "weakly color-avoiding".

 It is easy to see that a graph is locally chordal if and only if it does not contain a wheel on at least five vertices as an induced subgraph: if the graph contains an induced wheel on at least five vertices, then the outer cycle of this wheel is an induced cycle of length at least four in the neighborhood of the center vertex, therefore the graph is not locally chordal. To prove the reverse direction, suppose that the graph is not locally chordal, i.e., there exists a vertex whose neighborhood contains an induced cycle of length at least four. Then this vertex and this cycle together form an induced wheel on at least five vertices.

 Suppose to the contrary that $G'$ contains a wheel on $l+1 \ge 5$ vertices as an induced subgraph. Let $u$ be the center vertex of this wheel, and $w_1, \ldots, w_l$ be the vertices of the outer cycle (in this order), see Fig.~\ref{wheel-free}.
 
 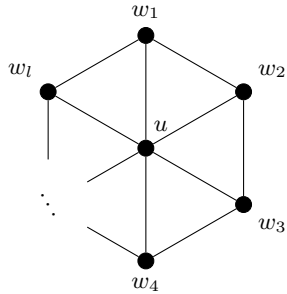
\begin{figure}
  \centering
  \begin{tikzpicture}
  \tikzstyle{vertex}=[draw,circle,fill=black,minimum size=6,inner sep=0]
  
  \node[vertex] (u) at (0,0) [label={[shift={(0.2,0)}] $u$}] {};
  \node[vertex] (w1) at (90:1.5) [label=above: $w_1$] {};
  \node[vertex] (w2) at (30:1.5) [label=above right: $w_2$] {};
  \node[vertex] (w3) at (330:1.5) [label=below right: $w_3$] {};
  \node[vertex] (w4) at (270:1.5) [label=below: $w_4$] {};
  \node (w5) at (210:1.5) {\rotatebox{-60}{$\ldots$}};
  \node[vertex] (w6) at (150:1.5) [label=above left: $w_l$] {};
  
  \draw (u) -- (w1);
  \draw (u) -- (w2);
  \draw (u) -- (w3);
  \draw (u) -- (w4);
  \draw (u) -- ($(u)!0.6!(w5)$);
  \draw (u) -- (w6);
  \draw (w6) -- (w1) -- (w2) -- (w3) -- (w4);
  \draw (w4) -- ($(w4)!0.6!(w5)$);
  \draw (w6) -- ($(w6)!0.6!(w5)$);
  \end{tikzpicture}
  \caption{The wheel on $l+1$ vertices.}
  \label{wheel-free}
 \end{figure}
 
 We can assume that the color of the vertex $w_2$ is $c_1$. Now consider the vertices $w_1$ and $w_3$. Since they are not connected in $G'$, there exists at least one color such that the removal of the vertices of that color disconnects them. On the other hand, the $c_i$-avoiding paths from $w_1$ to $w_2$ and from $w_2$ to $w_3$ (which exist since $w_1 w_2, w_2 w_3 \in E(G')$) can be combined into $c_i$-avoiding paths from $w_1$ to $w_3$ for every color $c_i \in C \setminus \{ c_1 \}$. (Obviously, this procedure does not work with color $c_1$ since the vertex $w_2$ is of color $c_1$.) Thus, only the removal of the vertices of color $c_1$ can disconnect $w_1$ and $w_3$. Therefore, $u$ must have also color $c_1$ (otherwise the $c_1$-avoiding paths from $w_1$ to $w_2$ and from $w_2$ to $w_3$ could be combined into a $c_1$-avoiding path from $w_1$ to $w_3$).
 
 Now consider the vertices $w_2$ and $w_4$. Since they are not connected in $G'$, there exists at least one color such that the removal of the vertices of that color disconnects them. However, the $c_i$-avoiding paths from $w_2$ to $u$ and from $u$ to $w_4$ (which exist since $u w_2, u w_4 \in E(G')$) can be combined into $c_i$-avoiding paths from $w_2$ to $w_4$ for every color $c_i \in C \setminus \{ c_1 \}$. Again, this procedure does not work with color $c_1$ since the vertex $u$ is of color $c_1$. But since $w_2$ is also of color $c_1$, $w_2$ and $w_4$ are weakly $c_1$-avoiding connected by definition. Hence, they are weakly color-avoiding connected, which is a contradiction.
\end{proof}

\begin{theorem}[\cite{gavril1996intersection}]
 The maximal cliques of any locally chordal graph can be found in polynomial time.
\end{theorem}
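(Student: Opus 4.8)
The plan is to reduce the problem to the classical, polynomial-time enumeration of maximal cliques in \emph{chordal} graphs, using the defining property that a locally chordal graph $H$ is precisely one in which the induced subgraph $H[N(v)]$ on the (open) neighborhood of each vertex $v$ is chordal.

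The crux is a structural lemma that I would establish first: the maximal cliques of a locally chordal graph $H$ are \emph{exactly} the sets of the form $\{v\}\cup M$, where $v\in V(H)$ and $M$ is a maximal clique of $H[N(v)]$, under the convention that the null graph has $\emptyset$ as its unique maximal clique (so an isolated vertex contributes the singleton $\{v\}$). For the ``only if'' direction: given a maximal clique $K$ of $H$, pick any $v\in K$; then $K\setminus\{v\}\subseteq N(v)$ is a clique of $H[N(v)]$, and it is maximal there, since any $w\in N(v)$ adjacent to all of $K\setminus\{v\}$ would be adjacent to all of $K$, forcing $w\in K$ by maximality of $K$. For the ``if'' direction: if $M$ is a maximal clique of $H[N(v)]$, then $\{v\}\cup M$ is a clique (as $M\subseteq N(v)$); and if some $w\notin\{v\}\cup M$ were adjacent to all of $\{v\}\cup M$, then $w\in N(v)$ and $w$ is adjacent to all of $M$, contradicting the maximality of $M$ in $H[N(v)]$. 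Note that this second direction shows, crucially, that every candidate set $\{v\}\cup M$ produced is \emph{already} a maximal clique of $H$, so no further extension test is needed.

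The remainder is bookkeeping. I would invoke the classical fact that a chordal graph on $m$ vertices has at most $m$ maximal cliques, all listable in polynomial time — e.g.\ by computing a perfect elimination ordering via Maximum Cardinality Search and reading off the candidate cliques $\{u\}\cup(\text{later neighbors of }u)$, then discarding the non-maximal ones. Applying this to each $H[N(v)]$ and forming the sets $\{v\}\cup M$ yields, by the lemma, the complete list of maximal cliques of $H$; since each $H[N(v)]$ contributes at most $|N(v)|\le n$ of them (and an isolated vertex contributes one), $H$ has $O(n^2)$ maximal cliques in total, and the whole procedure — loop over $v$, enumerate inside $H[N(v)]$, take the union, deduplicate — runs in polynomial time. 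In particular a largest clique of $H$ can be read off, which is exactly what the application to $G'$ requires.

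The main obstacle is the structural lemma, and specifically the observation that the produced candidates are automatically maximal in $H$: this is what simultaneously gives correctness and the polynomial bound on the number of maximal cliques, in contrast to general graphs, which may have exponentially many. The only other delicate point is the handling of degenerate cases — isolated vertices and empty induced neighborhoods — which the stated convention disposes of cleanly.
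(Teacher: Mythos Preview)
Your argument is correct. Note, however, that the paper does not actually prove this theorem: it is quoted as a known result from Gavril~\cite{gavril1996intersection} and used as a black box, so there is no ``paper's own proof'' to compare against.

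That said, your self-contained argument is a clean alternative to citing the literature. The structural lemma you state --- that every maximal clique of $H$ has the form $\{v\}\cup M$ with $M$ a maximal clique of $H[N(v)]$, and conversely --- is in fact valid for \emph{arbitrary} graphs, not just locally chordal ones; the locally chordal hypothesis enters only when you invoke chordality of each $H[N(v)]$ to bound the number of maximal cliques there by $|N(v)|$ and to list them via a perfect elimination ordering. This yields the $O(n^2)$ bound on the total number of maximal cliques and the polynomial running time after deduplication. One cosmetic remark: since the lemma holds in general, you might want to make explicit that the local chordality is used \emph{only} for the efficiency of the per-neighborhood enumeration, lest a reader wonder where the hypothesis is consumed.
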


\begin{corollary}
\ The \ problem \ {\scshape WeaklyColorAvoidingConnectedComponent} is in P. More precisely, the weakly color-avoiding connected components of $G$ can be found in polynomial time.
\end{corollary}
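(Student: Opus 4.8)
The plan is to simply bolt together the two preceding theorems. First I would observe that the auxiliary graph $G'$ of Theorem~\ref{G'locchordal} can be constructed in polynomial time: for each color $c_i \in C$ and each pair of vertices $u, v \in V(G)$, delete from $G$ all vertices of color $c_i$ and run a breadth-first search to decide whether $u$ and $v$ lie in the same component of what remains (with the convention that if $u$ or $v$ itself has color $c_i$, then $u$ and $v$ count as weakly $c_i$-avoiding connected, as in Def.~\ref{def:weak}); one puts an edge $uv$ into $G'$ precisely when this holds for every $i$ and $u,v$ lie in the same component of $G$. This costs $O\!\left(k \cdot |V(G)|^2 \cdot (|V(G)| + |E(G)|)\right)$ time, which is polynomial.

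Next I would invoke Theorem~\ref{G'locchordal} to conclude that $G'$ is locally chordal. By the very definition of the weakly color-avoiding connected components — maximal sets of vertices that are pairwise weakly color-avoiding connected — these components are exactly the inclusion-maximal cliques of $G'$. Hence, applying the theorem of Gavril \cite{gavril1996intersection} to the locally chordal graph $G'$, all of its maximal cliques, and therefore all weakly color-avoiding connected components of $G$, can be enumerated in polynomial time. Comparing the largest of these sizes with the input parameter $l$ then answers the decision problem {\scshape WeaklyColorAvoidingConnectedComponent} in polynomial time, so it lies in P.

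Since every step is a direct appeal to an already-established fact, there is essentially no obstacle; the only point worth keeping in mind is that a general graph may have exponentially many maximal cliques, so it is crucial that local chordality of $G'$ both bounds their number polynomially and makes them efficiently enumerable — which is exactly what \cite{gavril1996intersection} supplies. (If one wished to make this self-contained, one could note that every maximal clique containing a fixed vertex $v$ is a maximal clique of the chordal graph induced on the closed neighborhood $N_{G'}[v]$, of which there are at most $|N_{G'}[v]|$, giving at most $|V(G)|^2$ maximal cliques overall.)
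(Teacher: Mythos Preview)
Your proposal is correct and follows exactly the paper's approach: the corollary is obtained by combining Theorem~\ref{G'locchordal} (that $G'$ is locally chordal) with Gavril's result that maximal cliques of locally chordal graphs can be listed in polynomial time, together with the observation that $G'$ is constructible in polynomial time. You have simply spelled out the details that the paper leaves implicit.
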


The above theorem obviously can be applied in the more robust case when there may be multiple colors on the vertices resulting in indestructible nodes.

On the other hand, in the other case  -- when the vertices have multiple colors (lists of colors) and a vertex is destroyed whenever one of its colors is attacked -- seemingly paradoxically -- leads to a much harder, NP-complete problem.

\begin{theorem}
 \ The \ {\scshape WeaklyColorAvoidingConnectedComponent-\-List\-Of\-Colors} problem is NP-complete.
 \label{weak_lists_complexity}
\end{theorem}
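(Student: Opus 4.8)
The plan is to establish membership in NP (which is routine) and then NP-hardness by a reduction from \textsc{Clique}. For membership: a witness is a set $S\subseteq V(G)$ with $|S|\ge l$; to check it we verify for each pair $u,v\in S$ and each color $c_i$ that, after deleting every vertex whose color list contains $c_i$ (list membership is tested in polynomial time), either one of $u,v$ was deleted or a graph search finds a $u$--$v$ path. This is polynomial, so the problem is in NP.

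For the hardness I would use the following reduction. Given an instance $(H,l)$ of \textsc{Clique} with $V(H)=\{v_1,\dots,v_n\}$, build the graph $G$ as a \emph{star}: take the vertices $v_1,\dots,v_n$ together with one new center vertex $z$, join $z$ to each $v_i$, and add no edges among the $v_i$ (so $G$ is connected, as Def.~\ref{def:weak} requires). Let the color set be $C=\{c_e : e \text{ is a non-edge of } H\}$, one color per non-adjacent pair of $H$. Assign $c(z)=C$ (the center carries \emph{every} color) and $c(v_i)=\{c_e\in C : v_i\notin e\}$ (vertex $v_i$ carries exactly the colors of the non-edges not incident to it). Output this instance with target $l+1$. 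The construction is clearly polynomial.

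The core of the argument is to identify explicitly the auxiliary graph $G'$ on $V(G)$ in which two vertices are adjacent iff they are weakly color-avoiding connected (as in Theorem~\ref{G'locchordal}, now with $c$ valued in $2^C$). I claim $G'$ is the join of $H$ with the single vertex $z$: first, since $c(z)=C$, \emph{every} color deletes $z$, so by the definition of weak $c_i$-avoiding connectivity $z$ is weakly $c_i$-avoiding connected to every vertex for every $i$, hence $z$ is adjacent to all of $V(G)$ in $G'$; second, because $G$ is a star the only $v_i$--$v_j$ path passes through $z$ and $z$ is deleted by every color, so $v_i$ and $v_j$ are weakly $c_e$-avoiding connected if and only if $c_e\in c(v_i)\cup c(v_j)$, and unwinding the list definitions one checks that $c(v_i)\cup c(v_j)=C$ precisely when no non-edge of $H$ is incident to both $v_i$ and $v_j$, i.e.\ precisely when $\{v_i,v_j\}\in E(H)$. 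Consequently the maximal cliques of $G'$ are exactly the sets $\{z\}\cup K$ with $K$ a maximal clique of $H$, so $\omega(G')=\omega(H)+1$, and therefore $G$ has a weakly color-avoiding connected component of size at least $l+1$ if and only if $H$ has a clique of size at least $l$; this completes the reduction and hence the proof.

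The step I expect to need the most care is the bookkeeping of the color lists in both directions — showing $c(v_i)\cup c(v_j)=C$ (equivalently, that the "missing-color" sets $C\setminus c(v_i)$ and $C\setminus c(v_j)$ are disjoint) holds exactly for edges of $H$ — together with the small but decisive observation that giving $z$ all colors forces $z$ to count as "deleted" under every attack, which is what collapses the star into the join $H * K_1$ in $G'$. The degenerate cases (when $H$ is complete, so $C=\varnothing$, or when $H$ has universal or isolated vertices) should be checked but cause no difficulty.
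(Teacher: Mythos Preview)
Your proof is correct. Both your reduction and the paper's reduce from \textsc{Clique} via the same core trick: design the color lists so that each ``test'' color deletes all but two designated vertices, forcing weak color-avoiding connectivity between those two to coincide with adjacency in the \textsc{Clique} instance. The paper, however, does this more directly: it keeps the input graph $H$ itself, introduces one color $c_{\{u,v\}}$ for \emph{every} pair $\{u,v\}$, and places $c_{\{u,v\}}$ on the list of every vertex except $u$ and $v$; attacking $c_{\{u,v\}}$ then leaves only $u$ and $v$ alive, so they are weakly connected iff $uv\in E(H)$, and the auxiliary graph $G'$ is literally $H$ (no shift in the target, no extra vertex). Your version instead replaces $H$ by a star, adds an ``always-deleted'' center $z$, restricts colors to non-edges of $H$, and reads off $G'=H\ast K_1$, shifting the target by one. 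The star and the extra vertex are not needed---the paper shows one can work inside $H$ directly---but your construction buys two things: it proves NP-hardness already on stars (hence on trees), and it uses only one color per non-edge, which is exactly the economy the paper points out in its remark following the proof.
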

\begin{proof}
 Obviously, this problem is in NP. To show that this problem is NP-hard we reduce {\scshape Clique} to it.
 
 Assign a color to any two vertices, and add this color to the list of every other vertex (so altogether we use $\binom{n}{2}$ colors and every vertex has $\binom{n}{2} - (n-1)$ colors on its list). For an example on the construction of lists of colors see Fig.~\ref{weak_lists}. Now, two vertices are weakly color-avoiding connected if and only if they are adjacent in $G$. Hence, the weakly color-avoiding connected components of $G$ are exactly its maximal cliques, therefore our problem is indeed NP-complete.
\end{proof}

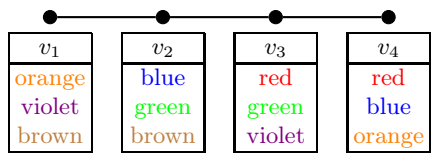
\begin{figure}
  \centering
  \begin{tikzpicture}
  \tikzstyle{vertex}=[draw,circle,fill,minimum size=5,inner sep=0]
  
  \node[vertex] (v1) at (0,0) [label=below: \begin{tabular}{|C|} \hline $v_1$ \\ \hline \small{\textcolor{orange}{orange}} \\ \small{\textcolor{violet}{violet}} \\ \small{\textcolor{brown}{brown}} \\ \hline \end{tabular}] {};
  \node[vertex] (v2) at (1.5,0) [label=below: \begin{tabular}{|C|} \hline $v_2$ \\ \hline \small{\textcolor{blue}{blue}} \\ \small{\textcolor{green}{green}} \\ \small{\textcolor{brown}{brown}} \\ \hline \end{tabular}] {};
  \node[vertex] (v3) at (3,0) [label=below: \begin{tabular}{|C|} \hline $v_3$ \\ \hline \small{\textcolor{red}{red}} \\ \small{\textcolor{green}{green}} \\ \small{\textcolor{violet}{violet}} \\ \hline \end{tabular}] {};
  \node[vertex] (v4) at (4.5,0) [label=below: \begin{tabular}{|C|} \hline $v_4$ \\ \hline \small{\textcolor{red}{red}} \\ \small{\textcolor{blue}{blue}} \\ \small{\textcolor{orange}{orange}} \\ \hline \end{tabular}] {};
  
  \draw[thick] (v1) -- (v2) -- (v3) -- (v4);
  
  %node[pos=0.5, above, sloped] {\small{\textcolor{red}{red}, \textcolor{blue}{blue}}};
  %\draw[very thick] (v1) -- (v4) node[pos=0.5, above, sloped] {\small{\textcolor{blue}{blue}}};
  %\draw[very thick] (v2) -- (v3) node[pos=0.5, below, sloped] {\small{\textcolor{red}{red}, \textcolor{green}{green}}};
  %\draw[very thick] (v2) -- (v4) node[pos=0.5, above, sloped] {\small{\textcolor{blue}{blue}, \textcolor{green}{green}}};
  %\draw[very thick] (v3) -- (v4) node[pos=0.5, below, sloped] {\small{\textcolor{green}{green}}};
  \end{tikzpicture}
  \caption{Constructing the lists of colors: we assign red to $v_1$ and $v_2$ (and add the color red to the list of $v_3$ and $v_4$), blue to $v_1$ and $v_3$, green to $v_1$ and $v_4$, orange to $v_2$ and $v_3$, violet to $v_2$ and $v_4$ and brown to $v_3$ and $v_4$.}
  \label{weak_lists}
\end{figure}
\vspace{-20pt}
\begin{remark}
In the above proof we can reduce the number of used colors by assigning colors only to nonadjacent pair of vertices; we can also reduce the lengths of the lists by adding this color only to a minimum vertex cut for these two nodes.
\end{remark}
\vspace{-5pt}
\section{Conclusion}
\label{conlusion}
\vspace{-3pt}
In this paper, we presented different notions to model various scenarios of shared vulnerabilities in complex networks by assigning colors to the edges or vertices using the framework of color-avoiding percolation developed by Krause et al.~\cite{krause2017color}. We also analyzed the complexity of finding the color-avoiding connected components.
Despite the similarity of the presented
concepts, the associated percolation problems -- seemingly paradoxically -- differ significantly regarding computational complexity.
We showed that the color-avoiding edge-connected components can be found in polynomial time. However, the complexity of finding the color-avoiding vertex-connected components highly depends on the exact definition, using a strong version the problem is NP-hard, while using a weaker notion makes it possible to find the components in polynomial time.
%A possible future work is to consider a trust scenario of allied colors, i.e. entities with a given color can trust in entities of other colors in a not necessarily symmetric or transitive way.
%In the future, we aim to further deepen the theoretical framework of color-avoiding percolation and prove rigorous results for Erd\H{o}s-R\'enyi random graphs.

\vspace{-2pt}
\subsubsection{Acknowledgment.} We thank Michael Danziger, Panna Fekete and Bal\'azs R\'ath for useful conversations.
The research reported in this paper was supported by the BME- Artificial Intelligence FIKP grant of EMMI (BME FIKP-MI/SC).
The publication is also supported by the EFOP-3.6.2-16-2017-00015 project entitled "Deepening the activities of  HU-MATHS-IN, the Hungarian Service Network for Mathematics in Industry and Innovations" through University of Debrecen. The work of both authors is partially supported by the NKFI FK 123962 grant. R. M. is supported by NKFIH K123782 grant and by MTA-BME Stochastics Research Group.

\bibliographystyle{abbrv}
\bibliography{coloravoiding}

\end{document}